\documentclass[letterpaper,10pt,conference]{ieeeconf}     

\IEEEoverridecommandlockouts                              

\usepackage{amsmath}
\usepackage{amssymb}

\usepackage{algpseudocode}
\usepackage{algorithm}
\usepackage{xcolor}
\usepackage{booktabs}
\usepackage{graphicx} 
\usepackage[normalem]{ulem}

\newtheorem{theorem}{Theorem}

\newtheorem{definition}{Definition}
\newtheorem{problem}{Problem}

\newtheorem{remark}{Remark}

\newcommand{\R}{\mathbb{R}}
\usepackage{physics}
\usepackage{needspace}
\usepackage{comment}

\title{\LARGE \bf
Hierarchical Defense in Leader–Defender–Attacker Games: Nested Equilibria and Receding-Horizon Control
}

\author{Honglei Guo$^{1,3}$, Hanfang Zhang$^{2,3}$, Guochuan Zhang$^{1}$ and Yuhan Zhao$^{3,*}$%
\thanks{$^{1}$Honglei Guo and Guochuan Zhang are with the College of Artificial Intelligence, Zhejiang University, Hangzhou 310027, China. (e-mail: \texttt{guohonglei@bigai.ai};
\texttt{zgc@zju.edu.cn})}%
\thanks{$^{2}$Hanfang Zhang  is with the School of Control Science and Engineering, Shandong University, Jinan 250061, China.(e-mail:\texttt{hanfangzhang@mail.sdu.edu.cn} ).}
\thanks{$^{3}$Honglei Guo, Hanfang Zhang, and Yuhan Zhao are with the State Key Laboratory of General Artificial Intelligence, Beijing Institute for General Artificial  Intelligence (BIGAI), Beijing, China .}%
\thanks{$^{*}$Corresponding authors: Yuhan Zhao (e-mail:
\texttt{yuhanzhao@zju.edu.cn};
).}%
}

\begin{document}

\maketitle
\thispagestyle{empty}
\pagestyle{empty}

\begin{abstract}

We formulate a discrete-time leader–defender–attacker game that combines hierarchical interaction within a protective group with competition against an external attacker. The leader seeks to reach a prescribed demand point while avoiding capture, whereas the defender seeks to intercept the attacker while remaining near the leader. To capture their distinct objectives and asymmetric interactions, we introduce a nested equilibrium concept combining a leader–defender Stackelberg relationship with a Nash-type best-response condition for the attacker. For linear dynamics and quadratic objectives, we derive affine response laws and a backward recursion, together with sufficient conditions for unique stagewise solutions. We further express arrival, capture, and interception conditions as quadratic inequalities in the initial joint state, characterizing winning regions under the resulting policy. A receding-horizon algorithm implements the computed control laws using updated state information. Numerical experiments over randomized initial configurations examine the influence of the prediction horizon and compare the proposed method with an independent-control baseline, illustrating differences in game outcomes and the agents’ realized costs.

\end{abstract}

\section{INTRODUCTION}

Pursuit--evasion games constitute an important class of multi-agent dynamic games. Because the agents move subject to dynamical constraints, optimal control provides a systematic foundation for designing policies that optimize
prescribed performance objectives ~\cite{lewis2012optimal,bertsekas1995dynamic}. When multiple agents pursue distinct objectives, however, their control policies become strategically coupled: each agent's performance depends on both its own decisions and
those of the other agents. Dynamic game theory provides a natural framework for modeling such interactions ~\cite{bacsar1998dynamic,engwerda2005lq}.  For example, Nash game requires that no player can improve its objective through a unilateral deviation while the other players' strategies remain fixed ~\cite{nash1951non}. While Stackelberg game instead describes a hierarchical interaction in which a leader selects its strategy while anticipating the best response of a follower ~\cite{basar1979closed}.

Classical pursuit--evasion mainly study the competition between a pursuer and an evader, including capture conditions, optimal trajectories, and winning
regions~\cite{isaacs1999differential}. In zero-sum settings, optimal strategies and winning regions are often
characterized using Hamilton--Jacobi--Isaacs equations and reachability analysis ~\cite{mitchell2005time,bansal2017hamilton}. In another target-protection scenarios, the evader may also receive assistance from a defender. The outcome then depends on whether the pursuer can capture the evader before the defender intervenes, introducing a cooperative relationship alongside the existing adversarial interaction.

The resulting three-agent pursuit--evasion--defense problem comprises an attacker (the pursuer), a target (the evader), and a defender that protects the target by intercepting or delaying the attacker.  Fisac and Sastry~\cite{fisac2015pursuit} analyzed this interaction using a double-obstacle Hamilton--Jacobi--Isaacs variational inequality. Garcia et al.~\cite{garcia2018design,garcia2021complete} modeled the target and defender as a cooperative team against the attacker and derived state-feedback saddle-point
strategies for the resulting zero-sum differential game. Liang et al.~\cite{liang2019differential} adopted a similar team-based formulation and constructed a barrier separating the winning regions of the two sides. Extensions to multiple attackers and defenders have employed time-optimal interception and assignment methods \cite{coon2017control}. However, despite their different solution
methods, these studies generally treat the target and defender as a cooperative team without modeling strategic asymmetry or a leader--follower hierarchy within the
protective group.

Hierarchical and simultaneous interactions have been combined more broadly in mixed-leadership dynamic games. Ba{\c{s}}ar et al.~\cite{basar2010mixed} introduced an
open-loop formulation in which each player leads in some decisions and follows in others. Linear--quadratic and feedback variants were subsequently developed using
coupled Riccati and Hamilton--Jacobi--Bellman equations, respectively
\cite{bensoussan2013linear,bensoussan2019feedback}.
 These studies demonstrate that hierarchical and simultaneous interactions can coexist, but they don't consider a fixed leader--follower pair with distinct objectives
competing against an independently optimizing third agent.

We study a leader--defender--attacker game in which the leader carries out its mission with protection provided by the defender. The leader and defender form a group with a shared protective task while retaining distinct individual objectives. The attacker pursues its own objective against this group, giving rise to cooperation within the group and competition between the group and the attacker. To capture these interactions, we propose a nested equilibrium formulation. For each fixed attacker strategy, the leader--defender subsystem forms a Stackelberg equilibrium in which the leader anticipates the defender's best response. The attacker satisfies a Nash-type best-response condition, evaluating unilateral deviations while holding the leader's and defender's strategies fixed. A nested equilibrium requires these internal and external equilibrium conditions to hold simultaneously.
\subsection*{Contributions}

\begin{itemize}
    \item We formulate a finite-horizon leader--defender--attacker game with distinct  objectives and introduce a nested equilibrium that to characterize the internal leader--follower hierarchy and external competition with the attacker.

    \item We establish sufficient conditions for the existence and uniqueness of the nested equilibrium and characterize the resulting winning regions in terms of the initial joint state.

    \item We develop a receding-horizon algorithm to solve for nested equilibrium and validate its effectiveness through numerical simulations.

\end{itemize}

\section{Problem Formulation}

\begin{figure}
    \centering
    \includegraphics[width=\linewidth]{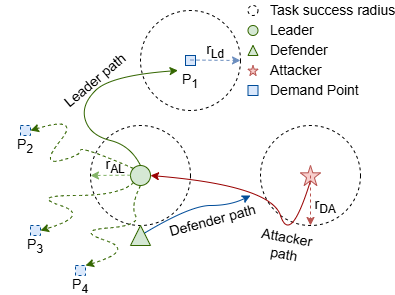}
    \caption{Leader-Defender-Attacker Problem Setting. The first agent to enter its corresponding
task-success region, defined by the radius $r_\alpha$, is declared the winner.}
    \label{fig:1}
\end{figure}

We consider a three-agent game including a leader (L), a defender (D), and an attacker (A) as shown in Figure \ref{fig:1}. The attacker aims to attack the leader by approaching the leader within a certain distance, while the leader needs to reach demand points to get resources while avoiding the attack. The defender is with the leader and his objective is to intercept the attacker while stay in the leader's vicinity.
Therefore, the leader and the defender work as a group to jointly play against the attacker. Within the group, the leader and the defender follows a hierarchical interaction, i.e., the leader decides and moves first while the defender takes his move after observing the leader's action.
%
We denote $x_t^i \in \R^n$ and $u^i_t \in \R^{m_i}$ as the agent $i$'s state and input at time $t$, $i \in \{L,D,A\}$. The state $x_i$ includes the position and velocity of each agent. Also let $x^{\mathrm{d}}$ be the state including the position of the demand point.
We assume linear dynamics for all agents, where we have
\begin{equation}
    \label{eq:dynamics}
    x^i_{t+1} = A^i x^i_t + B^i u^i_t, \quad i \in \{L,D,A\}.
\end{equation}
The interactions among the three agents can be characterized by the following cost functions:

\begin{equation}
\label{eq:J_leader}
\begin{aligned}
J^L =& 
\sum_{t=0}^{T-1}\gamma^t
\Big(
\|x_t^L-x^\mathrm{d}\|_{Q^{L_1}}^2
+\|u_t^L\|_{R^L}^2
-
\|x_t^L-x_t^A\|_{Q^{L_2}}^2\Big)\\
&+
\gamma^T
\Big(
\|x_T^L-x^\mathrm{d}\|_{Q^{L_1}}^2
-\|x_T^L-x_T^A\|_{Q^{L_2}}^2
\Big),
\end{aligned}
\end{equation}
\begin{equation}
\label{eq:J_defender}
\begin{aligned}
J^D =& \sum_{t=0}^{T-1}\gamma^t
\Big(
\|x_t^D-x_t^A\|_{Q^{D_1}}^2
+\|u_t^D\|_{R^D}^2
+\|x_t^L-x_t^D\|_{Q^{D_2}}^2
\Big)\\
&+\gamma^T
\Big(
\|x_T^D-x_T^A\|_{Q^{D_1}}^2
+\|x_T^L-x_T^D\|_{Q^{D_2}}^2
\Big),
\end{aligned}
\end{equation}
\begin{equation}
\label{eq:J_attacker}
\begin{aligned}
J^A =& \sum_{t=0}^{T-1}\gamma^t
\Big(
\|x_t^A-x_t^L\|_{Q^{A_1}}^2
+\|u_t^A\|_{R^A}^2
-\|x_t^A-x_t^D\|_{Q^{A_2}}^2
\Big)\\
&+\gamma^T
\Big(
\|x_T^A-x_T^L\|_{Q^{A_1}}^2
-\|x_T^A-x_T^D\|_{Q^{A_2}}^2
\Big).
\end{aligned}
\end{equation}
where $\gamma \in (0,1)$ is the discounted factor, $Q^{k}, R^i$ are penalty matrices, $k \in \{L_1, L_2, D_1, D_2, A_1, A_2\}$, $i \in \{L,D,A\}$.

The types of interactions in three agents are heterogeneous. Within the leader--defender group, their hierarchical interactions can be characterized as a Stackelberg game, with the attacker as an external player. Outside the group, they are viewed as one independent decision maker to interact with the attacker, leading to a normal-form dynamic game. Therefore, the overall problem can be formulated as a \emph{nested} game with two levels: the inner level determines the leader--defender Stackelberg strategy, and the outer level seeks the best-response strategy of the attacker. Such game couples the Stackelberg and Nash equilibria solutions, which requires a new equilibrium solution concept to characterize. This leads to the following problem.

\begin{problem}
\label{prob:nested_equilibrium}
Given the dynamics \eqref{eq:dynamics} and the objectives
\eqref{eq:J_leader}--\eqref{eq:J_attacker}, what are the optimal strategies for all three agents? Is there a winning region for each agent?
\end{problem}

Let $\pi^i$ denote the strategy of agent $i \in \{L,D,A\}$, and let $\Pi^i$ denote the corresponding admissible strategy set. Then we introduce the following equilibrium solution concept.

\begin{definition}[ Nested Equilibrium]
\label{def:nested}
We use $\operatorname{SE}$ to denote the Stackelberg equilibrium operator. For a fixed attacker strategy $\pi^A \in \Pi^A$, a strategy pair $(\pi^L, \pi^D) \in \operatorname{SE}(\pi^A)$ satisfies
\begin{equation*}
    J^D(\pi^{D*},\pi^L,\pi^A) \leq J^D(\pi^D,\pi^L,\pi^A), \quad \forall \, \pi^D \in \Pi^D,
\end{equation*}
and $\forall \, \pi^L \in \Pi^L$
\begin{equation*}
    J^L(\pi^{L*}, \pi^{D*} (\pi^{L*}), \pi^A) \leq J^L(\pi^L, \pi^{D*}(\pi^L), \pi^A).
\end{equation*}
A strategy triple $(\pi^{L*},\pi^{D*},\pi^{A*})$ constitutes a \emph{ nested equilibrium} if
\begin{equation*}
    J^A(\pi^{A*},\pi^{L*},\pi^{D*}) \leq   J^A(\pi^{A},\pi^{L*},\pi^{D*}), \quad \forall \, \pi^A \in \Pi^A,
\end{equation*}
and
\begin{equation*}
    (\pi^{L*},\pi^{D*}) \in \operatorname{SE}(\pi^{A*}).
\end{equation*}
\end{definition}

\section{Equilibrium and Winning Region Analysis} \label{sec:analysis}

To simplify analysis, we first perform some reformulations.
Based on \eqref{eq:dynamics}, we write the joint dynamics as
\begin{equation*}
    X_{t+1} = A X_t + B U_t
\end{equation*}
where $X_t = \begin{bmatrix} x_t^L, x_t^D, x_t^A \end{bmatrix} \in\mathbb{R}^{3n}$, $U_t = \begin{bmatrix} u_t^L, u_t^D, u_t^A \end{bmatrix} \in\mathbb{R}^{m_L+m_D+m_A}$. The aggregated system matrices are $A = \operatorname{diag}(A_t^L, A_t^D, A_t^A) \in \mathbb{R}^{3n\times 3n}$ and $B= \operatorname{diag}(B_t^L, B_t^D, B_t^A) \in \mathbb{R}^{3n\times (m_L+m_D+m_A) }$. Here, $\operatorname{diag}(\cdot)$ places the input matrices on the diagonal position and concatenate them.
%
Next, we reformulate the objectives in Eq. \eqref{eq:J_leader}--\eqref{eq:J_attacker} into an equivalent quadratic form. 
To simplify analysis, we write the objectives \eqref{eq:J_leader}--\eqref{eq:J_attacker} into compact forms.
Specifically, for each agent $i$, we express the stage cost as
\begin{equation*}
    \ell^i_{{t}}(X_t,{u}_t^i) = X_t^\top \widetilde Q^i X_t + (\widetilde q^i)^\top X_t + c^i + (u_t^i)^\top R^i u_t^i
\end{equation*}
where $\widetilde Q^i$, $\widetilde q^i$, and $c^i$ are 
given by
\begin{equation*}
\label{eq:quadratic_cost_parameters}
\left\{
\begin{aligned}
\widetilde{Q}^L
&=
\begin{bmatrix}
Q^{L_1}-Q^{L_2} & 0 & Q^{L_2}\\
0&0&0\\
Q^{L_2}&0&-Q^{L_2}
\end{bmatrix}
\\[3mm]
\widetilde{Q}^D
&=
\begin{bmatrix}
Q^{D_2}&-Q^{D_2}&0\\
-Q^{D_2}&Q^{D_1}+Q^{D_2}&-Q^{D_1}\\
0&-Q^{D_1}&Q^{D_1}
\end{bmatrix}
\\[3mm]
\widetilde{Q}^A
&=
\begin{bmatrix}
Q^{A_1}&0&-Q^{A_1}\\
0&-Q^{A_2}&Q^{A_2}\\
-Q^{A_1}&Q^{A_2}&Q^{A_1}-Q^{A_2}
\end{bmatrix}
\end{aligned}
\right.,
\end{equation*}
%
\begin{equation*}
\label{eq:linear_cost_parameters}
\widetilde q^L
=
\begin{bmatrix}
-2Q^{L_1}x^d\\
0\\
0
\end{bmatrix},
\quad
\widetilde q^D=0,
\quad
\widetilde q^A=0,
\end{equation*}
and 
\begin{equation*}
\label{eq:constant_cost_parameters}
c^L = (x^d)^\top Q^{L_1}x^d, \quad c^D=0, \quad c^A = 0.
\end{equation*}
The terminal cost can be written as
\begin{equation*}
\label{eq:terminal_cost}
\phi_T^i(X_T)
=
X_T^\top \widetilde Q^i X_T
+
(\widetilde q^i)^\top X_T
+
c^i.
\end{equation*}

For each agent $i\in\{L,D,A\}$, let $V_t^i(X_t)$ denote the cost-to-go from time $t$ to the terminal time $T$:
\begin{equation*}
V_t^i(X_t)
=
\sum_{\tau=t}^{T-1}
\gamma^{\tau-t}
\ell_{{\tau}}^i(X_\tau,{u}_\tau^i)
+
\gamma^{T-t}\phi_T^i(X_T).
\end{equation*}
Under the linear--quadratic structure, the cost-to-go function is
quadratic-affine in the joint state:
\begin{equation}
\label{eq:value_function}
V_t^i(X_t) = X_t^\top P_t^iX_t + 2(s_t^i)^\top X_t + c_t^{V,i}.
\end{equation}
At the terminal time,
\begin{equation}
\label{eq:terminal_conditions}
P_T^i = \widetilde Q^i,
\quad
s_T^i=\frac{1}{2}\widetilde q^i,
\quad
c_T^{V,i}=c^i.
\end{equation}
The matrices $P_t^i$ and vectors $s_t^i$ are computed backward from $t=T-1$ to $t=0$ using dynamic programming. 

To simplify the notation and facilitate the subsequent derivation, we introduce the following feedback gains.
For the leader, define
\begin{equation*}
\label{eq:L_gains_R}
\left\{
\begin{aligned}
\widehat K_{x,t}^L
&=
\gamma(\widehat H_t^L)^{-1}
(\widehat B_t^L)^\top
P_{t+1}^L\bar A_t,
\\
\widehat K_{A,t}^L
&=
\gamma(\widehat H_t^L)^{-1}
(\widehat B_t^L)^\top
P_{t+1}^L\bar B_t^A,
\\
\widehat k_t^L
&=
\gamma(\widehat H_t^L)^{-1}
(\widehat B_t^L)^\top
\left(
P_{t+1}^L\bar b_t+s_{t+1}^L
\right).
\end{aligned}
\right.
\end{equation*}

For the defender and attacker, indexed by $i\in\{D,A\}$, the corresponding feedback gains are defined as
\begin{equation*}
\label{eq:DA_gains_R}
\left\{
\begin{aligned}
K_{x,t}^i
&=
\gamma(H_t^i)^{-1}
(B^i)^\top P_{t+1}^i A,
\\
K_{j,t}^i
&=
\gamma(H_t^i)^{-1}
(B^i)^\top P_{t+1}^i B^j,
\qquad
j\in\{L,D,A\}\setminus\{i\},
\\
k_t^i
&=
\gamma(H_t^i)^{-1}
(B^i)^\top s_{t+1}^i.
\end{aligned}
\right.
\end{equation*}

where 
\begin{equation*}
\label{eq:reduced_matrices_R}
\left\{
\begin{aligned}
\bar A_t
&=
A-B^DK_{x,t}^D,
\\
\widehat B_t^L
&=
B^L-B^DK_{L,t}^D,
\\
\bar B_t^A
&=
B^A-B^DK_{A,t}^D,
\\
\bar b_t
&=
-B^Dk_t^D.
\end{aligned}
\right.
\end{equation*}
and 
\begin{equation*}
\label{eq:hessian_matrices}
\left\{
\begin{aligned}
H_t^D
&=
R^D+
\gamma(B^D)^\top P_{t+1}^D B^D,
\\[2mm]
\widehat H_t^L
&=
R^L+
\gamma(\widehat B_t^L)^\top
P_{t+1}^L
\widehat B_t^L,
\\[2mm]
H_t^A
&=
R^A+
\gamma(B^A)^\top P_{t+1}^A B^A.
\end{aligned}
\right.
\end{equation*}
Then we arrive at the following theorem.

\begin{theorem}[Existence of Affine Feedback Equilibrium]
\label{theorem:closed_form_equilibrium}

Under the linear dynamics, the first-order optimality conditions characterizing the nested equilibrium can be expressed as
\begin{equation*}
\mathcal M_tU_t
+
\mathcal K_{x,t}X_t
+
\kappa_t
=0 .
\end{equation*}
where
\begin{equation*}
\mathcal M_t
=
\begin{bmatrix}
I & 0 & \widehat K_{A,t}^L\\
K_{L,t}^D & I & K_{A,t}^D\\
K_{L,t}^A & K_{D,t}^A & I
\end{bmatrix}.
\end{equation*}
and 
\begin{equation*}
\label{eq:state_affine_terms}
\mathcal K_{x,t}
=
\begin{bmatrix}
\widehat K_{x,t}^L\\
K_{x,t}^D\\
K_{x,t}^A
\end{bmatrix},
\qquad
\kappa_t
=
\begin{bmatrix}
\widehat k_t^L\\
k_t^D\\
k_t^A
\end{bmatrix}.
\end{equation*}
If $\mathcal M_t$ is nonsingular, $\widehat H_t^L\succ0$, $H_t^D\succ0$, and $H_t^A\succ0$, the game admits an affine feedback equilibrium:
\begin{equation}
\label{eq:1}
U_t^* = K_t^*X_t+k_t^* ,
\end{equation}
where $K_t^*=-\mathcal M_t^{-1}\mathcal K_{x,t},$ and $k_t^*=-\mathcal M_t^{-1}\kappa_t .$
\end{theorem}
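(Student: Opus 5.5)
We argue by backward induction on $t$, using the quadratic-affine form \eqref{eq:value_function} of the cost-to-go as the induction hypothesis. At $t=T$ it holds by \eqref{eq:terminal_conditions}. Assume $V_{t+1}^i(X)=X^\top P_{t+1}^iX+2(s_{t+1}^i)^\top X+c_{t+1}^{V,i}$ for $i\in\{L,D,A\}$. At stage $t$, agent $i$ minimizes $Q_t^i(U_t)=\ell_t^i(X_t,u_t^i)+\gamma V_{t+1}^i(AX_t+BU_t)$. Here we write $BU_t=B^Lu^L_t+B^Du^D_t+B^Au^A_t$, where $B^i$ denotes the block column of $B$ acting on $u^i$. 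Since $\ell_t^i$ does not depend on the other agents' controls, every stagewise problem is a quadratic in the controls.

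\textbf{Defender (follower).} Fix $u_t^L,u_t^A$. Differentiating $Q_t^D$ in $u_t^D$ gives $H_t^Du_t^D+\gamma(B^D)^\top\big(P_{t+1}^D(AX_t+B^Lu_t^L+B^Au_t^A)+s_{t+1}^D\big)=0$. Because $H_t^D\succ0$, this quadratic is strictly convex, so the unique best response is
\[
u_t^D=-K_{x,t}^DX_t-K_{L,t}^Du_t^L-K_{A,t}^Du_t^A-k_t^D.
\]
This is the second block row of $\mathcal M_tU_t+\mathcal K_{x,t}X_t+\kappa_t=0$.

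\textbf{Leader.} Substituting the defender's best response into the dynamics gives
\[
X_{t+1}=\bar A_tX_t+\widehat B_t^Lu_t^L+\bar B_t^Au_t^A+\bar b_t ,
\]
with $\bar A_t,\widehat B_t^L,\bar B_t^A,\bar b_t$ as defined before the theorem. The leader minimizes $\ell_t^L+\gamma V_{t+1}^L$ along these reduced dynamics, with $u_t^A$ held fixed. The gradient condition is
\[
\widehat H_t^Lu_t^L+\gamma(\widehat B_t^L)^\top\big(P_{t+1}^L(\bar A_tX_t+\bar B_t^Au_t^A+\bar b_t)+s_{t+1}^L\big)=0 .
\]
Because $\widehat H_t^L\succ0$, the reduced problem is strictly convex. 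Its unique minimizer is $u_t^L=-\widehat K_{x,t}^LX_t-\widehat K_{A,t}^Lu_t^A-\widehat k_t^L$, which is the first block row. The middle entry of that row is $0$ because the leader has already eliminated $u_t^D$ by anticipating the defender's response. This step is exactly the Stackelberg condition of Definition~\ref{def:nested}, realized stagewise.

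\textbf{Attacker.} With $u_t^L,u_t^D$ fixed (the Nash condition), the same computation as for the defender, together with $H_t^A\succ0$, gives the third block row. Stacking the three rows produces $\mathcal M_tU_t+\mathcal K_{x,t}X_t+\kappa_t=0$. If $\mathcal M_t$ is nonsingular, this linear system has the unique solution $U_t^*=-\mathcal M_t^{-1}\mathcal K_{x,t}X_t-\mathcal M_t^{-1}\kappa_t$. By the positive-definiteness assumptions, this solution satisfies all three stagewise equilibrium conditions simultaneously.

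\textbf{Closing the induction.} We substitute $U_t^*=K_t^*X_t+k_t^*$ into $\ell_t^i+\gamma V_{t+1}^i$. The closed-loop state $X_{t+1}=(A+BK_t^*)X_t+Bk_t^*$ is affine in $X_t$, and $u_t^{i*}$ is affine in $X_t$. Hence each $V_t^i$ is again quadratic-affine, and we can read off $P_t^i$, $s_t^i$ and $c_t^{V,i}$ explicitly.

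The main obstacle is the leader's step. Two points need care there. First, the leader's reduced objective must be evaluated on the defender's \emph{policy}, not on a fixed defender action. This is what justifies using $\widehat B_t^L$ and the gains defined through $P_{t+1}^L$. Second, in a feedback (stagewise) Stackelberg sense, time-consistency must be handled by dynamic programming: the leader's continuation value $V_{t+1}^L$ has to be the one induced by the equilibrium policies at later stages. We rely on the induction hypothesis for this. We also note that the $P_t^i$ need not be positive definite, since the costs contain negative terms. Convexity therefore comes only from the assumed $H\succ0$ conditions, which is why the theorem imposes them.
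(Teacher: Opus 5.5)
Your proposal is correct and follows the paper's proof essentially step for step. The paper likewise derives the defender's first-order condition under $H_t^D\succ0$, the leader's reduced problem through $\widehat B_t^L,\bar A_t,\bar B_t^A,\bar b_t$ under $\widehat H_t^L\succ0$, and the attacker's condition under $H_t^A\succ0$, then stacks these into $\mathcal M_tU_t+\mathcal K_{x,t}X_t+\kappa_t=0$ and inverts $\mathcal M_t$. The only difference is presentational: your explicit backward induction on the quadratic-affine form of $V_t^i$ is the recursion for $P_t^i$, $s_t^i$, $c_t^{V,i}$ that the paper states right after its proof, and, as in the paper, your argument establishes the equilibrium conditions in the stagewise feedback sense.
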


\begin{proof}
\paragraph{Defender response}

The continuation value function $V_{t+1}^D(X_{t+1})$ is
given by Eq. \eqref{eq:value_function}. For fixed $(U_t^L,U_t^A)$, the defender
solves
\begin{equation*}
\label{eq:bellman_defender_proof}
\min_{U_t^D}
\left\{
\ell^D(X_t,U_t^D)
+
\gamma V_{t+1}^D(X_{t+1})
\right\}.
\end{equation*}
The first-order optimality condition is
\begin{align*}
0
={}&
2
\left[
R^D+
\gamma(B^D)^\top P_{t+1}^DB^D
\right]U_t^D
\nonumber\\
&+
2\gamma(B^D)^\top P_{t+1}^DA X_t
\nonumber\\
&+
2\gamma(B^D)^\top P_{t+1}^DB^L U_t^L
\nonumber\\
&+
2\gamma(B^D)^\top P_{t+1}^DB^A U_t^A
+
2\gamma(B^D)^\top s_{t+1}^D.
\end{align*}
If $H_t^D\succ0$, the defender's optimization problem is strictly convex in $U_t^D$ and its unique best response is
\begin{equation}
\label{eq:UD_response_R}
U_t^D
=
-K_{x,t}^DX_t
-K_{L,t}^DU_t^L
-K_{A,t}^DU_t^A
-k_t^D.
\end{equation}

\paragraph{Leader response.}

The leader anticipates the defender's best response
\eqref{eq:UD_response_R}. Substituting this response into the system
dynamics gives
\begin{equation*}
\label{eq:reduced_dynamics_R}
X_{t+1}
=
\bar A_tX_t
+
\widehat B_t^LU_t^L
+
\bar B_t^AU_t^A
+
\bar b_t,
\end{equation*}
In particular,
\begin{equation*}
\label{eq:effective_leader_input_R}
\frac{\partial X_{t+1}}{\partial U_t^L}
=
\widehat B_t^L
=
B^L-B^DK_{L,t}^D.
\end{equation*}

The leader's Bellman problem is therefore
\begin{equation*}
\label{eq:bellman_leader_R}
\min_{U_t^L}
\left\{
\ell^L(X_t,U_t^L)
+
\gamma V_{t+1}^L(X_{t+1})
\right\}.
\end{equation*}
Its first-order optimality condition is
\begin{align*}
0
={}&
2
\left[
R^L
+
\gamma(\widehat B_t^L)^\top
P_{t+1}^L\widehat B_t^L
\right]U_t^L
\nonumber\\
&+
2\gamma(\widehat B_t^L)^\top
P_{t+1}^L\bar A_tX_t
\nonumber\\
&+
2\gamma(\widehat B_t^L)^\top
P_{t+1}^L\bar B_t^AU_t^A
\nonumber\\
&+
2\gamma(\widehat B_t^L)^\top
\left(
P_{t+1}^L\bar b_t+s_{t+1}^L
\right).
\end{align*}
If $\widehat H_t^L\succ0$, the leader's unique Stackelberg response is
\begin{equation}
\label{eq:UL_response_R}
U_t^L
=
-\widehat K_{x,t}^LX_t
-\widehat K_{A,t}^LU_t^A
-\widehat k_t^L,
\end{equation}

\paragraph{Attacker response.}

The first-order optimality condition is
\begin{align*}
0
={}&
2
\left[
R^A+
\gamma(B^A)^\top P_{t+1}^AB^A
\right]U_t^A
\nonumber\\
&+
2\gamma(B^A)^\top P_{t+1}^AA X_t
\nonumber\\
&+
2\gamma(B^A)^\top P_{t+1}^AB^L U_t^L
\nonumber\\
&+
2\gamma(B^A)^\top P_{t+1}^AB^D U_t^D
+
2\gamma(B^A)^\top s_{t+1}^A.
\end{align*}
If $H_t^A\succ0$, the attacker's unique best response is
\begin{equation}
\label{eq:UA_response_R}
U_t^A
=
-K_{x,t}^AX_t
-K_{L,t}^AU_t^L
-K_{D,t}^AU_t^D
-k_t^A,
\end{equation}

\paragraph{Nested equilibrium.}

Collecting the defender's best response
\eqref{eq:UD_response_R}, the leader's Stackelberg response
\eqref{eq:UL_response_R}, and the attacker's best response
\eqref{eq:UA_response_R}, we obtain the coupled linear system
 \begin{equation*}
\label{eq:block_system_R}
\mathcal M_tU_t
+
\mathcal K_{x,t}X_t
+
\kappa_t
=
0,
\end{equation*}

If $\mathcal M_t$ is nonsingular, the coupled equilibrium controls
are uniquely determined by Eq. \eqref{eq:1}.
\end{proof}

Substituting the affine equilibrium control Eq.
\eqref{eq:1} into the closed-loop dynamics
gives
\begin{equation}
\label{eq:closed_loop_dynamics_R}
X_{t+1}
=
A_t^{\mathrm{cl}}X_t+b_t^{\mathrm{cl}},
\end{equation}
where
\begin{equation*}
\label{eq:closed_loop_matrices_R2}
A_t^{\mathrm{cl}} =
A+B K_t^*,
\quad 
b_t^{\mathrm{cl}}
=
B k_t^*.
\end{equation*}
Substituting the equilibrium control and the
closed-loop dynamics into the Bellman recursion yields
\begin{equation*}
P_t^i
=
\widetilde Q^i
+
(K_t^{i*})^\top R^iK_t^{i*}
+
\gamma
(A_t^{\mathrm{cl}})^\top
P_{t+1}^i
A_t^{\mathrm{cl}},
\end{equation*}
and
\begin{equation*}
s_t^i
=
\frac{1}{2}\widetilde q^i
+
(K_t^{i*})^\top R^ik_t^{i*}
+
\gamma
(A_t^{\mathrm{cl}})^\top
\left(
P_{t+1}^ib_t^{\mathrm{cl}}
+s_{t+1}^i
\right).
\end{equation*}
The constant term satisfies
\begin{equation*}
\begin{split}
c_t^{V,i} = c^i
& + (k_t^{i*})^\top R^ik_t^{i*} \\
& + \gamma
\left[
(b_t^{\mathrm{cl}})^\top
P_{t+1}^ib_t^{\mathrm{cl}}
+
2(s_{t+1}^i)^\top b_t^{\mathrm{cl}}
+
c_{t+1}^{V,i}
\right].
\end{split}
\end{equation*}

Therefore, starting from the terminal conditions Eq.
\eqref{eq:terminal_conditions}, the continuation-value parameters are
updated backward from $t=T-1$ to $t=0$.

Furthermore, to analyze the influence of the initial conditions on the final outcome, define the relative positions for $\alpha\in \{AL,DA,Ld\}$ as
\begin{equation}
\label{eq:2}
\rho_t^{AL}=p_t^A-p_t^L,
\rho_t^{DA}=p_t^D-p_t^A,
\rho_t^{Ld}=p_t^L-p^d,
\end{equation} 

with $d_t^\alpha=\|\rho_t^\alpha\|_2$, where $p_t^i$ denotes the position of agent $i$, and $AL$, $DA$, and $Ld$ represent attacker capture, defender interception, and leader arrival, respectively.

\begin{definition}[Event Completion Time]
\label{def:event_time}

The completion time in Definition~\ref{def:event_time}
therefore satisfies
\begin{equation*}
T^\alpha(X_0)
=
\inf\{t\in\mathbb Z_{\geq0}:\left\|d_t^\alpha(X_0)\right\|^2\leq r_{\alpha}^2\},
\end{equation*}

If the event never occurs, $T^\alpha=+\infty$.
The winning region of agent $i\in\{L,D,A\}$ is the set of initial
states from which agent $i$ is the first to achieve its corresponding
terminal event under the equilibrium policy. It is denoted by
$\mathcal W^i$.
\end{definition}

Under the affine equilibrium policy, let $\mathcal E=\{AL,DA,Ld\}$ denote the event index set.
For each $\alpha\in\mathcal E$, define
\begin{equation*}
\left\{
\begin{aligned}
F_0 &= I,
&
g_0 &= 0,\\
F_{t+1} &= A_t^{\rm cl}F_t,
&
g_{t+1} &= A_t^{\rm cl}g_t+b_t^{\rm cl}.
\end{aligned}
\right.
\end{equation*}

For each event $\alpha$, define
\begin{equation*}
\label{eq:event_mapping}
\Psi_t^\alpha=C^\alpha F_t,
\qquad
\eta_t^\alpha=C^\alpha g_t+c^\alpha,
\end{equation*}
where
\begin{equation*}
\label{eq:event_matrices}
\left\{
\begin{aligned}
C^{AL} &= F_p(E^A-E^L),
&\qquad c^{AL} &= 0,\\
C^{DA} &= F_p(E^D-E^A),
& c^{DA} &= 0,\\
C^{Ld} &= F_pE^L,
& c^{Ld} &= -p^d.
\end{aligned}
\right.
\end{equation*}

\begin{theorem}[Initial-State-Dependent Winner Regions]
\label{thm:winner_region}

Under the affine equilibrium policy and the closed-loop dynamics
\eqref{eq:closed_loop_dynamics_R}, for each
$\alpha\in\{AL,DA,Ld\}$, the event distance satisfies
\begin{equation*}
d_t^\alpha(X_0)
=
\left\|
\Psi_t^\alpha X_0+\eta_t^\alpha
\right\|.
\end{equation*}
Hence, the completion time of event $\alpha$ is determined by the
initial state as
$T^\alpha(X_0)$.

Assume that the game terminates when the first terminal event occurs and
that the event times are distinct. Then the winning regions are
\begin{equation*}
\begin{aligned}
\mathcal W^A
&=
\left\{
X_0:
T^{AL}(X_0)
<
\min\left\{
T^{DA}(X_0),T^{Ld}(X_0)
\right\}
\right\},
\\
\mathcal W^D
&=
\left\{
X_0:
T^{DA}(X_0)
<
\min\left\{
T^{AL}(X_0),T^{Ld}(X_0)
\right\}
\right\},
\\
\mathcal W^L
&=
\left\{
X_0:
T^{Ld}(X_0)
<
\min\left\{
T^{AL}(X_0),T^{DA}(X_0)
\right\}
\right\}.
\end{aligned}
\end{equation*}

\end{theorem}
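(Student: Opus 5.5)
The plan is a direct induction on the closed-loop recursion \eqref{eq:closed_loop_dynamics_R}, followed by unpacking Definition~\ref{def:event_time}. Fix $X_0$ and let the agents use the affine equilibrium policy \eqref{eq:1} of Theorem~\ref{theorem:closed_form_equilibrium}. I will first show by induction on $t$ that $X_t=F_tX_0+g_t$. The case $t=0$ holds because $F_0=I$ and $g_0=0$. For the inductive step, substituting the hypothesis into $X_{t+1}=A_t^{\rm cl}X_t+b_t^{\rm cl}$ gives
\begin{equation*}
X_{t+1}=A_t^{\rm cl}F_tX_0+A_t^{\rm cl}g_t+b_t^{\rm cl}=F_{t+1}X_0+g_{t+1}.
\end{equation*}
This matches the defining recursions for $F_t$ and $g_t$ exactly. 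The point to note is that $A_t^{\rm cl}$ and $b_t^{\rm cl}$ are computed offline from the backward recursion for $P_t^i,s_t^i$. They therefore do not depend on $X_0$, and the map $X_0\mapsto X_t$ is affine.

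Next I will express each relative position as an affine image of $X_t$. Here $E^i$ is the selector extracting $x^i$ from $X_t$, and $F_p$ extracts the position components of an agent state, so that $p_t^i=F_pE^iX_t$. Then, by \eqref{eq:2},
\begin{itemize}
\item $\rho_t^{AL}=F_p(E^A-E^L)X_t=C^{AL}X_t+c^{AL}$;
\item $\rho_t^{DA}=C^{DA}X_t+c^{DA}$;
\item $\rho_t^{Ld}=F_pE^LX_t-p^d=C^{Ld}X_t+c^{Ld}$.
\end{itemize}
Substituting $X_t=F_tX_0+g_t$ gives $\rho_t^\alpha=\Psi_t^\alpha X_0+\eta_t^\alpha$, hence $d_t^\alpha(X_0)=\|\Psi_t^\alpha X_0+\eta_t^\alpha\|$. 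It follows that the condition $(d_t^\alpha)^2\le r_\alpha^2$ is the quadratic inequality
\begin{equation*}
X_0^\top(\Psi_t^\alpha)^\top\Psi_t^\alpha X_0+2(\eta_t^\alpha)^\top\Psi_t^\alpha X_0+\|\eta_t^\alpha\|^2\le r_\alpha^2
\end{equation*}
in $X_0$. The infimum in Definition~\ref{def:event_time} is then the first $t$ at which this inequality holds, so $T^\alpha$ is a well-defined function of $X_0$ alone, with $T^\alpha=+\infty$ if the inequality never holds.

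For the winning regions, I will assume, as the statement does, that play stops at the first event and that the three event times are distinct. Under these assumptions the trajectory up to $\min_\alpha T^\alpha$ coincides with the untruncated closed-loop trajectory, so the $T^\alpha$ computed above are the relevant ones. Each agent is tied to its event: $A$ to $AL$, $D$ to $DA$, and $L$ to $Ld$. Agent $i$ is first exactly when its event time is strictly smaller than the other two, which yields the three displayed sets. I will also remark that these sets are pairwise disjoint and cover every state at which some event occurs.

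The main obstacle is one of interpretation rather than computation. Truncating the game at the first event must not change the trajectory before that time, and this holds because the policy is a fixed precomputed feedback. The selector matrices $E^i$ and $F_p$ must also be read consistently, since the paper never defines them explicitly. Finally, ties and the case where all three times are infinite are excluded by the distinctness hypothesis; in the all-infinite case no agent wins.
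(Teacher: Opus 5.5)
Your proposal is correct and follows the paper's proof essentially step for step: induction on the closed-loop recursion gives $X_t=F_tX_0+g_t$, the affine maps $C^\alpha X_t+c^\alpha$ turn this into $\Psi_t^\alpha X_0+\eta_t^\alpha$, the arrival/capture/interception condition becomes a quadratic inequality in $X_0$ with $T^\alpha$ its first time of satisfaction, and the winner is the agent whose event time is strictly smallest. Your extra remarks are sound refinements of points the paper leaves implicit: that the closed-loop gains do not depend on $X_0$, that stopping at the first event does not alter the earlier trajectory, and that the relative position $\rho_t^\alpha$ should be kept distinct from the scalar distance $d_t^\alpha$, which the paper's proof conflates.
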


\begin{proof}
Under the closed-loop dynamics  of Eq. (\ref{eq:closed_loop_dynamics_R}).
Using $F_0=I$, $g_0=0$, and the recursions
\begin{equation*}
    F_{t+1} = A_t^{\rm cl}F_t,
    \qquad
    g_{t+1} = A_t^{\rm cl}g_t+b_t^{\rm cl},
\end{equation*}
a direct induction gives
\begin{equation*}
\label{eq:state_initial_mapping}
    X_t = F_tX_0+g_t,
    \qquad
    t \in \mathbb Z_{\ge0}.
\end{equation*}
Based on Eq. (\ref{eq:2}), the corresponding relative position can be written as
\begin{equation*}
    d_t^\alpha 
    = C^\alpha(F_tX_0+g_t)+c^\alpha
    = \Psi_t^\alpha X_0+\eta_t^\alpha.
\end{equation*}
Consequently,
\begin{equation*}
    (d_t^\alpha)^2 =
    \left\| \Psi_t^\alpha X_0+\eta_t^\alpha \right\|^2.
\end{equation*}
Therefore, the terminal condition $d_t^\alpha\le r_\alpha$ is
equivalent to
\begin{equation*}
q_t^\alpha(X_0)
=
\left\|
\Psi_t^\alpha X_0+\eta_t^\alpha
\right\|^2-r_\alpha^2
\le0.
\end{equation*}
Since the game evolves in discrete time, the completion time of event
$\alpha$ is thus
\begin{equation*}
    T^\alpha(X_0) =
    \inf \left\{ t\in\mathbb Z_{\ge0}: q_t^\alpha(X_0) \leq 0 \right\}.
\end{equation*}

The game terminates at the first event encountered along the closed-loop equilibrium trajectory. Hence, the attacker wins precisely when
\begin{equation*}
    T^{AL} < \min\{T^{DA},T^{Ld}\},
\end{equation*}
the defender wins precisely when
\begin{equation*}
    T^{DA} < \min\{T^{AL},T^{Ld}\},
\end{equation*}
and the leader wins precisely when
\begin{equation*}
    T^{Ld} < \min\{T^{AL},T^{DA}\}.
\end{equation*}
\end{proof}

\begin{remark}[Planning objective and execution termination]
The costs in \eqref{eq:J_leader}--\eqref{eq:J_attacker}
define a fixed-horizon quadratic game. Capture, interception,
and arrival events do not truncate these planning costs.
Instead, they serve as stopping criteria during execution
and as outcome measures for the resulting trajectories.
Accordingly, the equilibrium characterization established
above concerns the fixed-horizon game; it does not establish
an equilibrium of a game whose costs terminate at the
first event.
\end{remark}

\section{Algorithm Design}
\label{sec:algorithm}

The preceding analysis characterizes the finite-horizon equilibrium for a given current state. To apply this equilibrium over the entire game evolution, we embed the equilibrium solver into a receding-horizon algorithm. The resulting procedure repeatedly predicts the future game, computes the corresponding nested equilibrium, executes only the first control action, and updates the prediction using the newly observed state.

At the beginning of sampling instant $t$, the current joint state $X_t$ is measured and treated as the initial condition of the prediction problem. Let $X_{t+k|t}$ denote the predicted joint state $k$ steps ahead based on the information available at time $t$, where
\begin{equation*}
    X_{t|t}=X_t, \qquad k=0,\ldots,H.
\end{equation*}
Similarly, let $U_{t+k|t}^*$ denote the equilibrium control predicted for the corresponding stage. The predicted trajectory is generated according to
\begin{equation*}
\label{eq:prediction_dynamics}
    X_{t+k+1|t} = A X_{t+k|t} + BU_{t+k|t}^*,
    \quad 
    k=0,\ldots,H-1.
\end{equation*}

For each prediction step, the finite-horizon equilibrium conditions derived previously are evaluated using the continuation values associated with the remaining prediction horizon. This produces a sequence of equilibrium feedback laws
\begin{equation}
\label{eq:predicted_policy_sequence}
    U_{t+k|t}^* = K_{t+k|t}^*X_{t+k|t} + k_{t+k|t}^*,
    \quad
    k=0,\ldots,H-1,
\end{equation}
where the gains and affine terms are obtained from the backward equilibrium recursion. 
We use the first control action is implemented:
\begin{equation*}
\label{eq:receding_control}
    U_t = U_{t|t}^*.
\end{equation*}
After the system evolves, the next state $X_{t+1}$ is observed and a new prediction problem is initialized according to
\begin{equation*}
    X_{t+1|t+1}=X_{t+1}.
\end{equation*}
Hence, the prediction window moves forward by one step,
\begin{equation*}
\label{eq:horizon_shift}
    [t,t+H]
    \;\longrightarrow\;
    [t+1,t+H+1],
\end{equation*}
and the finite-horizon equilibrium is recomputed using the updated state. This operation continues until a terminal event is detected. 
The overall closed-loop procedure is summarized in Algorithm~\ref{alg:rolling}.

\begin{algorithm}[t]
\caption{Receding-Horizon Nested Equilibrium Control}
\label{alg:rolling}
\begin{algorithmic}[1]
\Require Initial state $X_0$, prediction horizon $H$, and maximum
number of steps $N_{\mathrm{total}}$.

\For{$t=0,\ldots,N_{\mathrm{total}}-1$}

    \State Measure the current joint state $X_t$ and set
    $X_{t|t}\leftarrow X_t$.

    \State Initialize the terminal value parameters at $t+H$.

    \State Solve the finite-horizon nested equilibrium problem
    backward over the prediction window
    $\tau=t+H-1,\ldots,t$.

    \State Generate the predicted equilibrium trajectory
    $\{X_{\tau|t},U_{\tau|t}^*\}_{\tau=t}^{t+H-1}$.

    \State Apply only the first control action
    $U_t\leftarrow U_{t|t}^*$.

    \State Observe the resulting state $X_{t+1}$.

    \If{a terminal event occurs}
        \State \textbf{break}
    \EndIf

\EndFor
\end{algorithmic}
\end{algorithm}

\section{Numerical Experiments}
\label{sec:simulation}

We evaluate the proposed algorithm in a two-dimensional leader--defender--attacker environment. Its performance is assessed through a prediction-horizon sensitivity analysis and comparative experiments over 50 randomly generated initial configurations. The simulation settings are
described below.

\subsection{Simulation Setup}

The initial positions of the three agents are sampled
independently and uniformly from
$[0,10]\,\mathrm{m}\times[0,10]\,\mathrm{m}$, and their
initial velocities are set to zero. Each simulation runs
for at most $N_{\mathrm{total}}=200$ steps.

\paragraph{Dynamics and parameters.}
Each agent $i\in\{L,D,A\}$ follows the discrete-time
double-integrator dynamics \eqref{eq:dynamics}, with
\begin{equation*}
    A^i=
    \begin{bmatrix}
        I_2 & \Delta t I_2\\
        0_2 & I_2
    \end{bmatrix},
    \quad
    B^i=
    \begin{bmatrix}
        \frac{1}{2}\Delta t^2 I_2\\
        \Delta t I_2
    \end{bmatrix},
    \quad
    \Delta t=0.1\,\mathrm{s}.
\end{equation*}
The state and control input are
\begin{equation*}
    x_t^i=
    \begin{bmatrix}
        (p_t^i)^\top & (v_t^i)^\top
    \end{bmatrix}^{\top}\in\mathbb R^4,
    \quad
    u_t^i=
    \begin{bmatrix}
        a_{x,t}^i & a_{y,t}^i
    \end{bmatrix}^{\top}\in\mathbb R^2,
\end{equation*}
where $p_t^i$, $v_t^i$, and $u_t^i$ denote the position, velocity, and acceleration of agent $i$, respectively.

At each time step, the strategy sequence in
\eqref{eq:predicted_policy_sequence} is computed over a prediction horizon $H$, and only its first control input is
applied. The cost parameters are
\begin{equation*}
    Q^{i_1}=2I_4,\quad
    Q^{i_2}=I_4,\quad
    R^i=0.1I_2,
    \quad i\in\{L,D,A\},
\end{equation*}
with discount factor $\gamma=0.99$.

\paragraph{Terminal events.}
Leader arrival, attacker capture, and defender interception occur when
\begin{equation*}
    d_t^{Ld}\leq r_{Ld},\quad
    d_t^{AL}\leq r_{AL},\quad
    d_t^{DA}\leq r_{DA},
\end{equation*}
respectively, where $r_{Ld}=r_{AL}=r_{DA}=0.3\mathrm{m}$.
Execution stops at the first terminal event. Simultaneous first events are classified as ties, and trials with no event within $N_{\mathrm{total}}$ steps are classified as
no event. The terminal events stop execution but do not truncate the fixed-horizon planning costs.

\subsection{Different Prediction Horizon}
\label{subsec:sensitivity}

We evaluate sensitivity to the prediction horizon using $H\in\{1,5,10,15,20,25,30\}$. A fixed set of 50 randomly sampled initial configurations is reused for every horizon to ensure a consistent comparison. Each simulation runs for at most 300 steps; ``No event'' indicates that no terminal event occurs within this limit.

\begin{table}[H]
\centering
\caption{Outcomes over 50 random initial configurations.
Each entry gives the number of runs and its percentage.}
\label{tab:random_outcomes}
\setlength{\tabcolsep}{3pt}
\resizebox{\columnwidth}{!}{%
\begin{tabular}{@{}lccccccc@{}}
\toprule
Outcome
& $H=1$ & $H=5$ & $H=10$ & $H=15$
& $H=20$ & $H=25$ & $H=30$ \\
\midrule
Leader wins
& 3 (6\%) & 16 (32\%) & 16 (32\%) & 13 (26\%)
& 14 (28\%) & 15 (30\%) & 15 (30\%) \\
Defender wins
& 1 (2\%) & 23 (46\%) & 16 (32\%) & 12 (24\%)
& 13 (26\%) & 12 (24\%) & 12 (24\%) \\
Attacker wins
& 0 (0\%) & 11 (22\%) & 18 (36\%) & 25 (50\%)
& 23 (46\%) & 23 (46\%) & 23 (46\%) \\
No event
& 46 (92\%) & 0 (0\%) & 0 (0\%) & 0 (0\%)
& 0 (0\%) & 0 (0\%) & 0 (0\%) \\
\midrule
Total
& 50 (100\%) & 50 (100\%) & 50 (100\%) & 50 (100\%)
& 50 (100\%) & 50 (100\%) & 50 (100\%) \\
\bottomrule
\end{tabular}%
}
\end{table}

As shown in Table~\ref{tab:random_outcomes}, $H=1$
produces no terminal event in $92\%$ of the trials,
indicating that one-step prediction is too myopic under
the considered setting. All trials terminate for
$H\geq5$. The defender has the highest win rate at $H=5$,
whereas the outcomes are more balanced at $H=10$. For
$H\geq15$, the attacker has the highest win rate, reaching
$50\%$ at $H=15$ and remaining at $46\%$ for
$H\in\{20,25,30\}$.

The outcome distribution changes only slightly beyond
$H=20$ and is identical for $H=25$ and $H=30$, suggesting
that the effect of the prediction horizon saturates at
the level of aggregate outcome frequencies. This does not
imply identical trajectories or control inputs. These
frequencies are empirical results for the sampled initial
conditions and should not be interpreted as formal
measures of the winning regions.

Figure~\ref{fig:horizon_outcome_examples} compares the
trajectories generated from the same initial configuration
under different prediction horizons. In this example, the
prediction horizon affects both the terminal outcome and
the termination time. No event occurs for $H=1$; the
defender wins for $H=5$ and $H=10$; and the leader wins
for $H\geq15$. Among the completed runs, the termination
time decreases as the prediction horizon increases.

\begin{figure}[t]
\centering
\includegraphics[width=\linewidth]{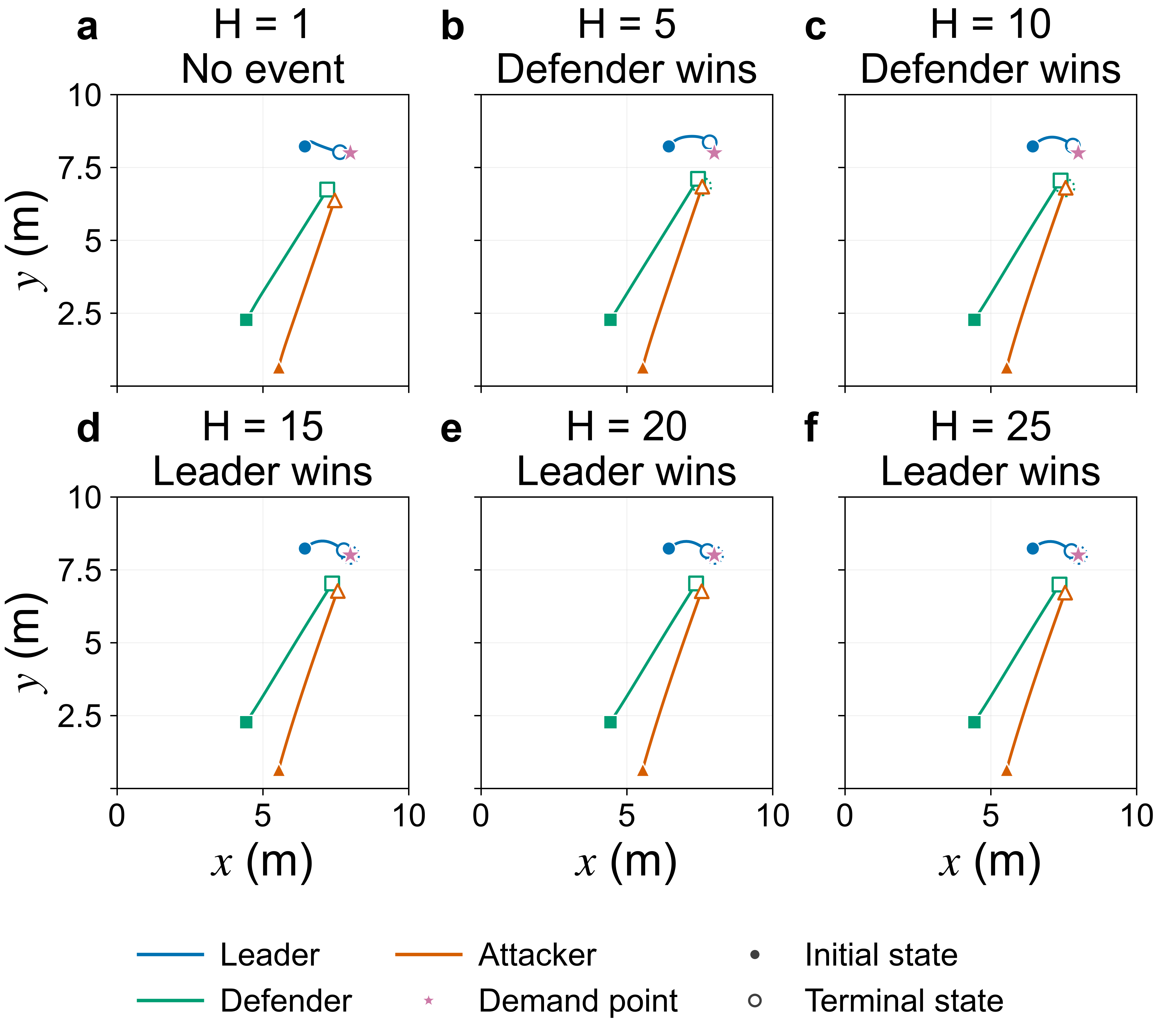}

\caption{Receding-horizon trajectories from the same initial configuration for $H=1,5,10,15,20,$ and $25$, respectively. No event occurs for $H=1$, the defender wins for $H=5$ and $10$, and the leader wins for $H=15$, $20$, and $25$.}
\label{fig:horizon_outcome_examples}
\end{figure}

\subsection{Comparison with the Independent-Control Baseline}

To assess the contribution of the proposed nested game
formulation, we compare it with an independent-control
baseline. Under this baseline, each agent derives its
finite-horizon control policy while treating the strategies
of the other agents as fixed. Consequently, the leader does
not incorporate the defender's best-response mapping into
its optimization.

Let $J^i_{\mathrm{sim}}$ denote the realized discounted
objective of agent $i$ under the independent-control
baseline, and let $J^i_{\mathrm{nested}}$ denote that under
the proposed nested formulation, where $i\in\{L,D,A\}$.
The objective difference is defined as
\begin{equation*}
    \Delta J^i = J^i_{\mathrm{sim}} - J^i_{\mathrm{nested}}.
\end{equation*}
Thus, $\Delta J^i>0$ indicates a lower cost under the nested formulation.

Figure~\ref{fig:independent_control_comparison} presents the
results for $H=5$. The leader difference is positive in all
50 trials, whereas the attacker difference is negative in
all trials. The defender difference varies across the
experiments. Therefore, incorporating the internal
leader--defender hierarchy consistently reduces the
leader's cost while increasing the attacker's cost,
indicating more effective coordination against the
attacker.

\begin{figure}[t]
\centering
\includegraphics[width=0.98\linewidth]{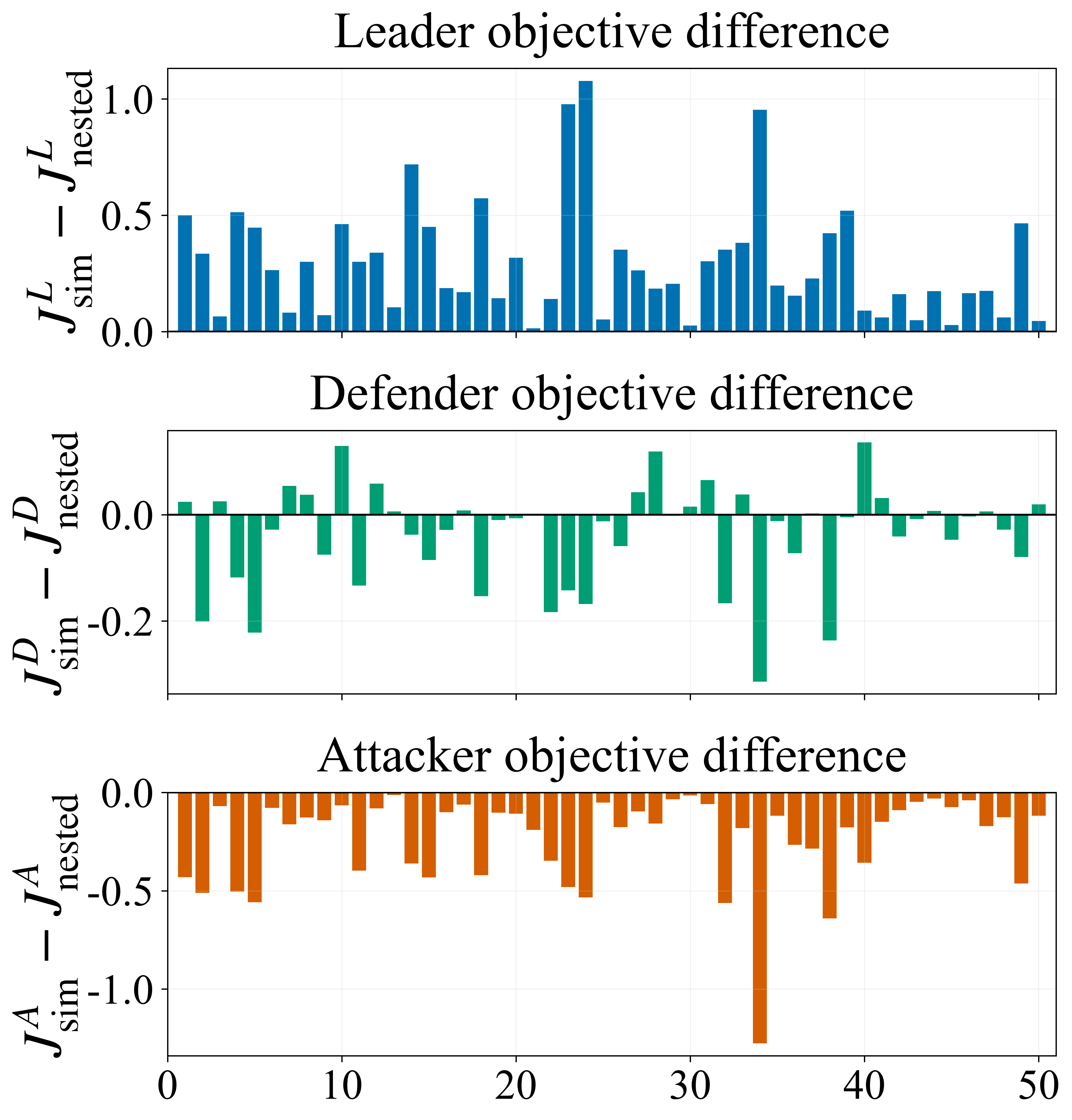}
\caption{Objective-function differences between the proposed
nested-equilibrium algorithm and the independent-control
baseline for $H=5$. Both methods use identical initial
conditions and model parameters.}
\label{fig:independent_control_comparison}
\end{figure}

\section{Conclusion}

This paper formulated a finite-horizon
leader--defender--attacker game and proposed a nested
equilibrium that combines an internal leader--follower
hierarchy with external noncooperative competition. For
linear--quadratic systems, we derived affine response laws,
coupled backward recursions, and sufficient conditions for
unique stagewise solutions. We further characterized the
policy-induced outcome regions through quadratic
inequalities in the initial joint state and implemented
the solver in a receding-horizon scheme. Numerical results
demonstrated the influence of the prediction horizon and
showed that the nested formulation reduced the leader's
realized cost while increasing the attacker's cost relative
to independent control. Future work will address nonlinear
dynamics, constraints, uncertainty, and theoretical
guarantees for receding-horizon execution.



\raggedbottom

\bibliographystyle{IEEEtran}
\bibliography{acc_introduction_references}

\end{document}